\documentclass[10pt]{article}

\usepackage{amssymb}
\usepackage{amsfonts}
\usepackage{amsmath}
\usepackage{amscd}
\usepackage[all,cmtip]{xy}
\usepackage{amsthm}
\usepackage{setspace}
\usepackage{enumerate}
\usepackage{geometry}
\usepackage{tcolorbox}
\usepackage{hyperref}
\usepackage{tikz}
\usepackage{pgfplots}

\hypersetup{
allbordercolors=blue, pdfborder=0 0 1.3}

\theoremstyle{plain}
\newtheorem{thm}{Theorem}
\newtheorem{lem}{Lemma}

\newtheorem{cor}[subsection]{Corollary}

\theoremstyle{definition}
\newtheorem*{example}{Example}
\newtheorem{question}{Question}

\title{Weyl orbit particles}
\author{Martin T.  Luu$^1$ \footnote{matluu@ucdavis.edu}} 

\date{\small{$^1$ Department of Mathematics, University of California, Davis,  CA 95616, United States of America}}

\begin{document}

\maketitle

\begin{abstract}
The mass spectrum of affine Toda theory is known to be expressible in terms of a suitable eigenvector of the relevant Cartan matrix.  The particles correspond in a precise manner to the Coxeter element orbits in the set of roots. Recently, variants of affine Toda theory have been constructed for many different Weyl group elements.  Again,  the particles correspond to orbits in the set of roots and this allows the calculation of the classical mass spectrum.  We show how these spectral calculations generalize the affine Toda relation with the Cartan matrix.  As an example, we calculate the spectrum for the unique non-Coxeter infinite family $\textrm{D}_{2n}(a_{n-1})$ of primitive regular conjugacy classes in the Weyl groups of complex simple Lie algebras.
\end{abstract}

\section{Introduction}

Let $\mathfrak g$ be a simple finite-dimensional complex Lie algebra and let $\sigma_{C}$ be a Coxeter element in the Weyl group of $\mathfrak g$.  Fring-Liao-Olive showed in \cite{FLO} that the particles of the affine Toda theory associated to $\mathfrak g$ correspond to the orbits of the cyclic group $\langle \sigma_{C} \rangle$ acting on the set of roots.  A beautiful consequence is an expression of the mass spectrum in terms of the entries of a suitable eigenvector of the Cartan matrix of $\mathfrak g$.  See the important experimental work by Coldea et. al.  \cite{COL} for an observation of parts of this spectrum in quasi $1$-dimensional magnetic systems related to $\mathfrak e_8$ affine Toda theory.  A mathematical survey of these developments is given in \cite{BG}.

Freeman introduced in \cite{FRE} a very useful re-formulation of the affine Toda theory Lagrangian. This involves Kostant's notion of a cyclic element in $\mathfrak g$, introduced in \cite{KOS}. Fix a Cartan algebra $\mathfrak h$ in $\mathfrak g$, and for $1 \le i \le r=\textrm{rank } \mathfrak g$ let $e_i$ be generators of the root spaces with respect to a choice of simple roots.  Further, let $e_0$ be a generator of the lowest root space.  The corresponding cyclic element is defined as 
$$\Lambda_{+}=\sum_{i=0}^{r}e_i$$ 
Kostant showed in \cite{KOS} that its centralizer is again a Cartan algebra $\mathfrak h'$,  different from $\mathfrak h$. Consider a field $\phi : \mathbb{R}^{2} \rightarrow \mathfrak h$ and let $(-,-)$ denote the Killing form.  Freeman showed that for a suitable element $\Lambda_{-}$ in $\mathfrak h'$,  affine Toda theory can be formulated in terms of the Lagrangian
$$ \frac{1}{2}(\partial_{\mu} \phi, \partial^{\mu} \phi)  - (\exp(\textrm{ad } \phi )(\Lambda_{+}) , \Lambda_{-})$$
In this description,  the Weyl group assumes a bigger role: both $\Lambda_{+}$ and $\Lambda_{-}$ are eigenvectors of a Coxeter element.  In \cite{LUU} we used this formulation to introduce new Lagrangians based on eigenvectors of Weyl group elements $\sigma$ different from Coxeter elements. We call these Toda-Weyl theories and the classical mass spectrum is calculated in \cite{LUU}. Again, the particles of the theory correspond to orbits (in the set of roots) of the cyclic group generated by $\sigma$.  These Weyl orbit particles have a mass with the following Lie theoretic expression: If $\gamma_1,\gamma_2,\cdots$ are orbit representatives, then by \cite{LUU} (Theorem 1) the masses are given by
$$|\Lambda_{+} \cdot \gamma_i|$$
The task is then to find an explicit description of these values, for example mimicking the description in terms of eigenvectors of the Cartan matrix if $\sigma$ is a Coxeter element.  In \cite{LUU} we gave some examples of this, without a systematic description of the eigenvectors of Weyl group elements in terms of eigenvectors of analogues of the Cartan matrix.  In the present work we give the full details of this correspondence.  In Section \ref{Carter-Weyl-section} the mathematical background is described,  and in Section \ref{example-section} this is used to give the full mass spectrum for a particularly interesting infinite family of Toda-Weyl theories associated to simple Lie algebras of type D: We consider a Weyl group element $\sigma$ in the conjugacy class $\textrm{D}_{2n}(a_{n-1})$ (see Section \ref{example-section} for the definition).  More specifically, we assume $\Lambda_{+}$ is an eigenvector of $\sigma$ with eigenvalue $e^{2 \pi i/2n}$.  The eigenspace is $2$-dimensional, and for each choice of eigenvector one obtains a Toda-Weyl theory.

For simple Lie algebras of type $\textrm{D}_{2n}$ the Coxeter element is of order $4n-2$ and the usual affine Toda mass spectrum has an elegant trigonometric expression
$$\sin\left (\frac{\pi}{4n-2} \right ), \cdots, \sin\left (\frac{(2n-2)\pi}{4n-2} \right ), \frac{1}{2} \times \bf{2}$$
where we adopt the notation $m \times \textbf{i}$ to signify that the value $m$ occurs with multiplicity $\textbf{i}$.  See for example \cite{DOR} (Table 2) for the relevant eigenvector calculations.  We show that for the conjugacy class $\textrm{D}_{2n}(a_{n-1})$ analogous trigonometric formulas exist as well.

\begin{thm}
\label{type-D-theorem}
Let $\sigma$ be a Weyl group element in the conjugacy class $\textrm{\emph{D}}_{2n}(a_{n-1})$.  There are basis vectors $\textrm{\emph{D}}_{2n}(a_{n-1})_{\textrm{\emph{\textbf{I}}}}$ and $\textrm{\emph{D}}_{2n}(a_{n-1})_{\textrm{\emph{\textbf{II}}}}$ of the $2$-dimensional $e^{2 \pi i/2n}$-eigenspace of $\sigma$,  such that the corresponding Toda-Weyl theories have the following mass spectrum:

\renewcommand*{\arraystretch}{2.2}

$$\begin{array}{|c||c|c|c|c|c|c}
\hline
 \textrm{\emph{D}}_{2n}(\textrm{\emph{a}}_{n-1})_{\textbf{\emph{I}}} & 0 & \sin \left (\frac{\pi}{2n} \right  ) \times \bf{4}  & \cdots & \sin \left (\frac{(n-1)\pi}{2n} \right ) \times \bf{4} & 1 \\
 \hline 
\textrm{\emph{D}}_{2n}(\textrm{\emph{a}}_{n-1})_{\textbf{\emph{II}}} & 1  \times  \bf{(4n-2)}\\  \cline{1-2}
\end{array}$$
\end{thm}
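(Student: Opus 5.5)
We would make everything explicit in the standard realization of $\mathrm{D}_{2n}$ and then evaluate the mass formula $|\Lambda_{+}\cdot\gamma|$ of \cite{LUU} (Theorem 1) on one representative of each $\langle\sigma\rangle$-orbit.

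\emph{Setup.} Realize the roots of $\mathrm{D}_{2n}$ as $\pm\epsilon_a\pm\epsilon_b$ ($a\neq b$) in $\mathbb{R}^{2n}$. Split the coordinates into two blocks, $x_1,\dots,x_n$ and $y_1,\dots,y_n$. In Carter's notation $\mathrm{D}_{2n}(a_{n-1})$ has signed cycle type $[\bar n\,\bar n]$, so up to conjugacy
$$\sigma: x_k\mapsto x_{k+1},\quad x_n\mapsto -x_1,\qquad y_k\mapsto y_{k+1},\quad y_n\mapsto -y_1.$$
Since conjugate elements give equivalent theories, we may take $\sigma$ of this form. Put $\zeta=e^{2\pi i/2n}$. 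Each negative $n$-cycle has each primitive $2n$-th root of unity as a simple eigenvalue, with eigenvector $\sum_k \zeta^{-k}x_k$ (and similarly for the $y$-block). Hence the $\zeta$-eigenspace is spanned by
$$u=\sum_k\zeta^{-k}x_k,\qquad w=\sum_k\zeta^{-k}y_k,$$
and a general eigenvector is $\Lambda_+=\alpha u+\beta w$. The element $\sigma$ is regular of order $2n$, so every orbit has length $2n$. With $4n(2n-1)$ roots this gives exactly $4n-2$ particles, which matches both rows of the table.

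\emph{Orbit representatives and pairings.} Each orbit contains a root whose first index is $1$. We would take representatives $x_1\pm x_{1+k}$, $y_1\pm y_{1+k}$ for $1\le k\le n-1$, and $x_1\pm y_{1+k}$ for $0\le k\le n-1$, and check that they are pairwise non-conjugate. The check uses the fact that $\sigma^n=-1$ and that going around a cycle flips the sign. For $x_1\pm x_{1+k}$ we get $\Lambda_+\cdot\gamma=\alpha\zeta^{-1}(1\pm\zeta^{-k})$, which has modulus $2|\alpha|\sin(k\pi/2n)$ or $2|\alpha|\cos(k\pi/2n)=2|\alpha|\sin((n-k)\pi/2n)$. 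The $y$-block gives the same values with $\alpha$ replaced by $\beta$. The mixed roots give $|\alpha\pm\beta\zeta^{-k}|$.

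\emph{Choice I.} Take $\alpha=\beta=\tfrac12$, i.e. $\mathrm{D}_{2n}(a_{n-1})_{\mathbf I}=\tfrac12(u+w)$. Each block then contributes every value $\sin(j\pi/2n)$, $1\le j\le n-1$, exactly once. The mixed roots give $\tfrac12|1\pm\zeta^{-k}|$. For $k=0$ these are $0$ and $1$. For $1\le k\le n-1$ they again run through each $\sin(j\pi/2n)$ twice. In total each sine value has multiplicity $4$, and the values $0$ and $1$ each occur once, as in the first row.

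\emph{Choice II (main obstacle).} Here we need a second eigenvector for which all $4n-2$ orbit values coincide. The naive candidates fail: $\beta=0$ produces zeros, and $\beta=i\alpha$ keeps the intra-block sine spread. So the main difficulty is to identify the correct vector $\mathrm{D}_{2n}(a_{n-1})_{\mathbf{II}}$. Before searching, we would first recheck the normalization in \cite{LUU} of how the mass depends on $\Lambda_+$, since it may be that a companion vector $\Lambda_-$ or the Hermitian pairing enters the formula. We would then look for the eigenvector among those for which the intra-block differences cancel. The most promising route is to choose a different conjugate representative of $\sigma$, one that interleaves the two cycles, for example $\epsilon_1\mapsto\epsilon_2\mapsto\cdots\mapsto\epsilon_{2n}\mapsto$ with signs arranged so that $\sigma$ still has type $[\bar n\,\bar n]$. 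In that model the natural eigenvector has all coordinates of modulus $1$ with phases spaced by $\pi/2$. We would then verify that $|\Lambda_+\cdot\gamma|$ is constant on representatives, which would give the second row $1\times(4n-2)$. Finally we would check that this vector is linearly independent of $\mathrm{D}_{2n}(a_{n-1})_{\mathbf I}$, so that the two vectors form a basis of the eigenspace.
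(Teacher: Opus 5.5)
\textbf{The gap: the second row.} You never exhibit $\mathrm{D}_{2n}(a_{n-1})_{\mathbf{II}}$, and the route you sketch cannot produce it. Passing to an interleaved conjugate of $\sigma$ changes nothing. The map $(\sigma,\Lambda,\gamma)\mapsto(\mu\sigma\mu^{-1},\mu\Lambda,\mu\gamma)$ preserves the multiset of orbit values, and the paper uses exactly this to fix one representative. There is also no hidden normalization: the paper's masses are $|\Lambda_+\cdot\gamma|$ for the bilinear Killing form, as you computed.

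More decisively, your own formulas show the second row is unattainable. Take any $\Lambda_+=\alpha u+\beta w$ (your coefficients). The $n-1$ orbits inside the $x$-block take the values $2|\alpha|\sin(j\pi/2n)$, $1\le j\le n-1$. For $n\ge 3$ these are pairwise distinct unless $\alpha=0$, in which case they all vanish. For $n=2$, equality of $\sqrt{2}|\alpha|$, $\sqrt{2}|\beta|$, $|\alpha\pm\beta|$, $|\alpha\pm i\beta|$ forces $\alpha\bar\beta=0$, hence $\alpha=\beta=0$. So no vector of the $e^{2\pi i/2n}$-eigenspace has spectrum $1\times(4n-2)$. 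The second row of the statement is false, and the correct outcome of your approach is a disproof of it, not a search for a better vector.

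\textbf{Where the paper's argument fails.} Its vector $\mathbf v_2$ vanishes on the Carter roots $\alpha_{n+1},\dots,\alpha_{2n-1}$, so it is the case $\beta=0$ in your notation. The paper's proof for $\mathbf{II}$ evaluates only the $2n-2$ representatives $\eta_k,\nu_i$, which do all give $\tfrac12$. It never returns to the remaining $2n$ representatives of Lemma~\ref{orbit-lemma}, namely the Carter roots themselves. By Corollary~\ref{main-corollary} their values are the entries of $\mathbf v_2$:
\begin{itemize}
\item $\sin(j\pi/2n)$ for $1\le j\le n-1$;
\item $\tfrac12$ twice;
\item $0$, $n-1$ times.
\end{itemize}
The paper even uses $\Lambda\cdot\alpha_{n+1}=0$ in its $\nu_i$ step, although $\alpha_{n+1}$ is itself an orbit representative. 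After rescaling, the spectrum of $\mathbf v_2$ is $0\times(n-1)$, each $2\sin(j\pi/2n)$ once, and $1\times 2n$. This is exactly what your block model gives for $\beta=0$.

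\textbf{The first row.} Here your approach is correct and more elementary than the paper's. The paper reaches the same numbers through the Carter--Weyl correspondence. It takes an eigenvector $\mathbf v_1$ of the Carter matrix, applies Corollary~\ref{main-corollary} to the Carter roots, and then sums over the Carter-root expansions of $\eta_k,\nu_i$. Your coordinates instead make the two blocks and the $2n$ mixed orbits visible at once.

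One correction is needed: your list of representatives is too long. Applying $\sigma^{2n-k}$ gives $x_1+x_{1+k}\sim x_1-x_{1+(n-k)}$, so each block contains only $n-1$ orbits. Your list has $6n-4$ entries rather than $4n-2$, and the non-conjugacy check you propose would fail. Keep one sign per block. Then each block contributes every $\sin(j\pi/2n)$ exactly once, which is what you in fact used, and together with the $2n$ mixed orbits this completes the argument for Choice I.
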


\renewcommand*{\arraystretch}{1.0}

\hspace{0.2in}

\section{Carter-Weyl correspondence}
\label{Carter-Weyl-section}

In this section we describe the mathematical background that allows us to relate the Toda-Weyl masses to eigenvectors of generalized Cartan matrices.

Let $\sigma$ be an element of the Weyl group $W$ of a finite-dimensional  simple complex Lie algebra $\mathfrak g$.  We are interested in a subtle relation between the $\sigma$ action on the root space, and the $\sigma$ action on the set of roots.  More precisely,  let $\Phi$ be the set of roots,  and decompose it into orbits
$$\Phi= \mathcal O_{1} \sqcup \cdots \sqcup \mathcal O_{s}$$
under the action of the cyclic group generated by $\sigma$. Let $\gamma_1,\cdots,\gamma_s$ be orbit representatives.  View $W$ as acting on a Cartan algebra $\mathfrak h$ and view the roots as elements of $\mathfrak h$ via the Killing form.  Now let $\Lambda$ in $\mathfrak h$ be an eigenvector of $\sigma$.

\begin{question}
\label{main-question}
What are the values $|\Lambda \cdot \gamma_{i}|$ for $1\le i \le s$?
\end{question}
We call these the Weyl-Orbit values.  Since the Killing form is invariant under Weyl group elements,  these absolute values are independent of the choice of orbit representatives.  When the Weyl group element $\sigma$ is a Coxeter element,  the Weyl-Orbit values are very interesting.  The number of orbits $s$ equals the rank of $\mathfrak g$,  and it is known that the Weyl-Orbit values correspond to the entries of an eigenvector of the Cartan matrix of $\mathfrak g$.  This was used by physicists to calculate the affine Toda theory mass spectrum, see \cite{DOR}, \cite{FRE}, \cite{FLO}.  Put differently, the relative geometry of the simple roots yields the answer to Question \ref{main-question} for Coxeter elements.  In the present work we generalize this phenomenon beyond the Coxeter case, by showing in what way the Weyl-Orbit values are governed by the relative geometry of certain special sets of roots.

The main idea is to generalize work of Coxeter \cite{COX} that relates the linear algebra of Cartan matrices and Coxeter elements,  a generalised correspondence we call the Carter-Weyl correspondence.  As noted in the work of Berman-Lee-Moody \cite{BLM}, this circle of ideas could be described as mathematical folklore,  and many aspects are described in detail in \cite{BLM}. However, the results are not quite in the form most convenient for the present purposes and we therefore present the Carter-Weyl correspondence from scratch, without any claim to originality. 

For a root $\alpha$ denote by $r_{\alpha}$ the corresponding reflection in the hyperplane orthogonal to $\alpha$.  For a Weyl group element $\sigma$ choose roots $\phi_{1}, \cdots, \phi_{k}$ such that
$$\sigma = r_{\phi_{1}} \cdots r_{\phi_{k}}$$
In \cite{CAR}, Carter defines $l(\sigma)$ to be the smallest possible such $k$ and  shows that 
$$l(\sigma) \le \textrm{rank }\mathfrak g$$
Suppose this inequality is strict. Then as in \cite{CAR} (Lemma 9), there is a rank $l(\sigma)$ subroot system of the root system of $\mathfrak g$, such that $\sigma$ is in the corresponding Weyl subgroup. Hence, in the following we  restrict our attention to Weyl group elements such that $l(\sigma) = \textrm{rank }\mathfrak g$.  It then follows from the work of Carter \cite{CAR} (Section 3), that there are mutually orthogonal roots $\xi^{+}_{1},\cdots, \xi^{+}_{a}$ and mutually orthogonal roots $\xi^{-}_{1},\cdots, \xi^{-}_{b}$ such that $a+b = \textrm{rank } \mathfrak g$ and $\sigma= \sigma_{+} \sigma_{-}$ where
\begin{eqnarray}
\label{sigma-decomposition}
\sigma_{\pm}= \prod_{i} r_{\xi^{\pm}_{i}}
\end{eqnarray}
We call such a collection of roots a set of Carter roots for $\sigma$.  Note that since $l(\sigma)=\textrm{rank }\mathfrak g$,  it follows from \cite{CAR} (Lemma 3) that the Carter roots are linearly independent.  Order them arbitrarily as $\xi_1,\xi_2,\cdots, \xi_{r}$ and define $c_i =\pm 1$ depending on whether $\xi_i$ is of the form $\xi_{j}^{+}$ or $\xi_{j}^{-}$.  Further,  define the Carter matrix $K$ in complete analogy with the Cartan matrix via 
$$K_{i,j} = 2  \; \xi_i \cdot \xi_j  /\xi_j \cdot \xi_j$$
We can now state the generalization of the relation between Coxeter elements and Cartan matrices.
\begin{thm}[Carter-Weyl Correspondence]
\label{Carter-Weyl-theorem}
Let $\sigma$ be an element of Weyl group of a simple complex finite-dimensional Lie algebra $\mathfrak g$ of rank $r$ such that $l(\sigma)=r$.  Let $\xi_1,\cdots,\xi_r$ be Carter roots for $\sigma$, with corresponding Carter matrix $K$.  Let $S_{\sigma}$ and $S_{K}$ denote the multi-sets of eigenvalues of $\sigma$ and $K$. 
\begin{enumerate}[{\normalfont (i)}]
\item
There are $\theta_1,\theta_2,\cdots, \theta_{r}$ in $[0,2\pi)$ such that 
\begin{align*}
S_{\sigma} &=\{e^{ i \theta_1},\cdots, e^{i \theta_{r}}\} \\[0.07in]
S_{K} &= \{ 2 - 2 \cos\left(\frac{\theta_{1}}{2} \right ) ,\cdots  ,  2 - 2 \cos\left(\frac{\theta_{r}}{2} \right )  \} 
\end{align*}
\item
There is an isomorphism between the left $(2 -2 \cos\left(\frac{\theta}{2} \right ))$-eigenspace of $K$ and the right  $e^{i\theta}$-eigenspace  of $\sigma$,  given by
$$(x_1,\cdots,x_r) \mapsto \sum_{j=1}^{r}  x_j e^{c_j i \theta/4 } \cdot \xi_j$$
\end{enumerate}
\end{thm}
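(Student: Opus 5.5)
The plan follows Coxeter's classical argument, applied to the bipartite splitting of the Carter roots into the two mutually orthogonal families $\{\xi^{+}_i\}$ and $\{\xi^{-}_j\}$. Write $\lambda$ for an eigenvalue of $K$ and set $\mu = 2-\lambda$. Since $\xi_i \cdot \xi_i / \xi_i\cdot\xi_i = 1$ and roots within one family are orthogonal, we have $K = 2I - A$, where $A$ has only off-diagonal blocks linking the $+$ and $-$ families. Moreover $K$ is a Gram-type matrix $G D$, with $G$ the symmetric positive definite Gram matrix of the linearly independent $\xi_j$ and $D$ diagonal with positive entries $2/\xi_j\cdot\xi_j$. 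Hence $K$ is diagonalizable with real positive eigenvalues. Conjugating by $\mathrm{diag}(c_1,\dots,c_r)$ sends $A \mapsto -A$, so the spectrum of $A$ is symmetric under $\mu\mapsto -\mu$, and $\mu \in (-2,2)$ because $K$ is positive definite and $4I-K=\mathrm{diag}(c)K\,\mathrm{diag}(c)$ is too. I will therefore write $\mu = 2\cos(\theta/2)$ with $\theta \in (0,2\pi)$, which determines $\theta$ uniquely from $\lambda$.

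The key computation is as follows. Given a left eigenvector $x$ of $K$ with eigenvalue $\lambda$, put $u_{\pm} = \sum_{j:\,c_j=\pm1} x_j \xi_j$. Using $r_{\alpha}(\beta)=\beta - (2\,\beta\cdot\alpha/\alpha\cdot\alpha)\,\alpha$, mutual orthogonality within each family, and the identity $\sum_{j:\,c_j=+1} x_j K_{j,k} = (\lambda-2)x_k$ for $c_k=-1$ (and symmetrically), I expect
$$\sigma_{+}u_{+}=-u_{+},\quad \sigma_{+}u_{-}=u_{-}+\mu u_{+},\quad \sigma_{-}u_{-}=-u_{-},\quad \sigma_{-}u_{+}=u_{+}+\mu u_{-}.$$
Thus $\sigma=\sigma_+\sigma_-$ preserves $\mathrm{span}(u_+,u_-)$ and acts there by the matrix $\begin{pmatrix}\mu^2-1 & -\mu\\ \mu & -1\end{pmatrix}$. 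This matrix has trace $\mu^2-2 = 2\cos\theta$ and determinant $1$, so its eigenvalues are $e^{\pm i\theta}$. Solving for the $e^{i\theta}$-eigenvector gives $b = a\mu/(1+e^{i\theta}) = a e^{-i\theta/2}$. Rescaling, the vector $e^{i\theta/4}u_+ + e^{-i\theta/4}u_-=\sum_j x_j e^{c_j i\theta/4}\xi_j$ is an $e^{i\theta}$-eigenvector of $\sigma$; this can also be checked directly from the four relations above, which avoids dividing by $1+e^{i\theta}$ in the case $\theta=\pi$. It is nonzero because the $\xi_j$ are linearly independent (Carter, Lemma 3) and the coefficients $x_j e^{c_j i\theta/4}$ are not all zero. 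So the map in (ii) is a well-defined injective linear map from the left $\lambda$-eigenspace of $K$ into the $e^{i\theta}$-eigenspace of $\sigma$.

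Surjectivity and part (i) then follow by counting dimensions, and I expect this bookkeeping to be the main point needing care. Distinct eigenvalues $\lambda$ give distinct $\theta\in(0,2\pi)$, so the images land in distinct eigenspaces of $\sigma$. Since $K$ is diagonalizable, its left eigenspaces have total dimension $r$. The images therefore span $r$ dimensions inside $\mathfrak h$, which has dimension $r$. Consequently each map is an isomorphism, every eigenvalue of $\sigma$ (which is diagonalizable, being of finite order) is of the form $e^{i\theta}$ for some such $\theta$, and the multiplicities agree. This gives the matched multisets in (i), with each $\theta_k$ read off from $\lambda_k=2-2\cos(\theta_k/2)$. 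I would double-check the degenerate cases $u_+=0$ or $u_-=0$: these force $\lambda=2$, i.e. $\theta=\pi$, where the formula still holds by direct check. I would also confirm that no eigenvalue $1$ of $\sigma$ is missed, which is consistent with $\theta\in(0,2\pi)$ and with $l(\sigma)=r$.
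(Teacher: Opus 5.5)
Your proof is correct. Its core coincides with the paper's: the plane spanned by $u_{\pm}$ (the paper's $q_{\pm}$), the relations $\sigma_{\mp}u_{\pm}=u_{\pm}+\mu u_{\mp}$, the $2\times 2$ matrix with eigenvalues $e^{\pm i\theta}$, the eigenvector $e^{i\theta/4}u_{+}+e^{-i\theta/4}u_{-}$, and injectivity from the linear independence of the Carter roots. The genuine difference is how you obtain part (i) and surjectivity in (ii).

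The paper gets the spectral relation before, and independently of, the eigenvector construction. It uses the operator identity $(\sigma_{+}+\sigma_{-})^{2}=2+\sigma+\sigma^{-1}$, together with the fact that $\sigma_{+}+\sigma_{-}$ acts on the Carter basis through $2I-K$. This yields the multiset equality $\{(2-\lambda_i)^2\}=\{2+\mu_i+\mu_i^{-1}\}$. The symmetry $\lambda\mapsto 4-\lambda$ then recovers both square roots, and Carter's Table 3 is invoked to exclude the eigenvalue $1$ of $\sigma$. Surjectivity in (ii) is left implicit in the matched multiplicities.

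You instead use positivity of the Gram matrix: $K$ is similar to $D^{1/2}GD^{1/2}$, and $4I-K$ is similar to $K$. This confines the spectrum of $K$ to $(0,4)$ and attaches a unique $\theta\in(0,2\pi)$ to each $\lambda$. A single dimension count over the direct sum of eigenspaces of $\sigma$ then delivers, all at once, (i), the equality of multiplicities, surjectivity, and the absence of the eigenvalue $1$.

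Your route is more self-contained: no table lookup is needed, since the exclusion of $1$ falls out of the linear independence of the Carter roots. It also makes the multiplicity bookkeeping fully rigorous, whereas the paper's ``choose an arbitrary $\mu_j$'' step is somewhat informal on that point. The paper's route, in turn, gets the eigenvalue correspondence directly from an operator identity, using only diagonalizability of $K$, without needing the eigenvectors first.

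One small wording fix: $K$ is not symmetric, so ``$K$ is positive definite'' should read ``$K$ is similar to a positive definite symmetric matrix,'' which is all your argument uses.
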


For Coxeter elements,  the result goes back to the work of Coxeter \cite{COX} and Coleman \cite{COLE}.  We show how Carter's work \cite{CAR} can be used to generalize this to more general Weyl group elements.  Throughout, we closely follow the treatment by Fring-Liao-Olive in \cite{FLO} of the Coxeter case (see also the work of Brillon-Schechtman \cite{BS} for a similar discussion).  A fundamental tool employed in \cite{FLO} is that Dynkin diagrams can be bi-colored. This is clear, since these diagrams are trees.  Let us start by recalling in what way such a bi-colorability carries over to general Weyl group elements.

Given a product decomposition $\sigma=\sigma_{+}\sigma_{-}$ with $\sigma_{\pm}$ as in Equation (\ref{sigma-decomposition}), one associates a certain graph, called by Carter an admissible diagram.  The vertices correspond to the Carter roots and the $i$'th and $j$'th vertices are joined by $N_{i,j}$ lines where
$$N_{i,j}=K_{i,j}\cdot K_{j,i}$$
Hence, if the Carter roots are a set simple roots, one simply obtains the Dynkin diagram.  In very subtle work, Carter uses these diagrams in \cite{CAR} to classify conjugacy classes in Weyl groups. Note that a given conjugacy class might be described by different admissible diagrams.  

Contrary to the case of simple roots, in general there will be some indices $i\ne j$ with $K_{i,j}>0$.  It will be useful for our purposes to keep track of this:  Define a new graph exactly as the admissible diagrams,  except that if $K_{i,j}>0$ then the edges between the $i$'th and $j$'th vertex are dotted.  See the work of Stekolshchik \cite{STE} for a more in-depth study of such ``Carter diagrams''. 

\begin{example}
The conjugacy class $\textrm{E}_6(\textrm{a}_1)$ in the Weyl group of $\mathfrak e_6$ has admissible diagram
$$\xymatrix{
\circ \ar@{-}[r] \ar@{-}[d] & \circ  \ar@{-}[r] \ar@{-}[d]  & \circ  \\
\circ \ar@{-}[r] & \circ  \ar@{-}[r] & \circ
}$$
One of the underlying Carter diagrams is
$$\xymatrix{
\circ \ar@{-}[r] \ar@{.}[d] & \circ  \ar@{-}[r] \ar@{-}[d]  & \circ  \\
\circ \ar@{-}[r] & \circ  \ar@{-}[r] & \circ
}$$
as can be seen by the following choices of roots:
$$\xymatrix{
\alpha_2\ar@{-}[r] \ar@{.}[d] &\alpha_3 + \alpha_4  \ar@{-}[r] \ar@{-}[d]  & \alpha_1  \\
\alpha_2+\alpha_4 \ar@{-}[r] & \alpha_5   \ar@{-}[r] & \alpha_6
}$$
where $\alpha_1,\cdots, \alpha_6$ are simple roots, indexed as in \cite{BOURBAKI}.  Different choices of roots can have different Carter diagrams with same admissible diagrams.
\end{example}
From the decomposition $\sigma= \sigma_{+}\sigma_{-}$ it follows that admissible diagrams (and Carter diagrams) can be bi-colored, simply by assigning one color to roots of the form $\xi_i^{+}$ and assigning the other color to roots of the form $\xi_{i}^{-}$.  So in the previous example one can color
$$\xymatrix{
\circ \ar@{-}[r] \ar@{-}[d] & \bullet \ar@{-}[r] \ar@{-}[d]  & \circ  \\
\bullet \ar@{-}[r] & \circ  \ar@{-}[r] & \bullet
}$$   

This consequence of Carter's work allows to obtain the generalization of the 
Cartan-Coxeter correspondence described in Theorem \ref{Carter-Weyl-theorem}.  The following proof closely follows the seminal work by Fring-Liao-Olive \cite{FLO}, which in turn is based on Coxeter's crucial insights in \cite{COX}. 
We simply put here these arguments in their natural context of general bi-colorable root configurations.

\begin{proof}[Proof of Theorem \ref{Carter-Weyl-theorem}]

Let $\xi_{1},\cdots,\xi_{r}$ be the Carter roots, and group them as before into two groups  $\xi_{1}^{+},\cdots,\xi_{a}^{+}$ and $\xi_{1}^{-},\cdots,\xi_{b}^{-}$  of mutually orthogonal roots.
Let $r_{\xi_{i}}$ denote the reflection in the $i$'th Carter root, so
$$r_{\xi_{i}} \xi_j = \xi_j - K_{j,i} \xi_i$$
For an index $1\le i \le r$ we write $i \in +$ if $\xi_i$ is of the form $\xi_{j}^{+}$ for some $j$, and analogously we write $i \in -$ if $\xi_i$ is of the form $\xi_{j}^{-}$ for some $j$.

For $i \in \pm$,  clearly $\sigma_{\pm}\xi_{i}=-\xi_i$ and we now calculate $\sigma_{\mp} \xi_i$.  Recall from Equation (\ref{sigma-decomposition}) that the definition of $\sigma$ involves mutually orthogonal roots $\xi^{+}_{1},\cdots, \xi^{+}_{a}$ and mutually orthogonal roots $\xi^{-}_{1},\cdots, \xi^{-}_{b}$.  For notational convenience we set $t$ to be $b$ or $a$, depending on whether $i \in +$ or $i \in -$. Then 
\begin{eqnarray}
\sigma_{\mp} \xi_i = r_{\xi^{\mp}_{t}} \cdots  r_{\xi^{\mp}_{1}} \xi_{i}=r_{\xi^{\mp}_{t}} \cdots r_{\xi^{\mp}_{2}} (\xi_i-K_{i,\tilde 1}\xi_{1}^{\mp})
\end{eqnarray}
where $\tilde 1$ is the index such that $\xi_{1}^{\mp}=\xi_{\tilde 1}$. Iterating this process, one obtains for $i \in \pm$
\begin{eqnarray}
\label{opposite-equation}
\sigma_{\mp} \xi_i =\xi_{i} - \sum_{j \in \mp } K_{i,j}\xi_{j}
\end{eqnarray}
Consider the linear endomorphism of root space given by $\sigma_{+}+\sigma_{-}$.  As in \cite{BLM}, one obtains
\begin{align}
(\sigma_{+} + \sigma_{-})(\xi^{\pm}_i) &= - \sum_{j \in \mp} K_{i,j} \xi_j \\[0.07in]
\label{summation-equation}
& =\sum_{j=1}^{r} (2\delta_{i,j}-K_{i,j})\xi_{j}
\end{align}
Now let $x_1,\cdots, x_r$ denote the entries of a left eigenvector of the Carter matrix, so there is a scalar $c$ such that
\begin{eqnarray}
\label{eigenvector-equation}
\sum_{i=1}^{r} x_i K_{i,j}= c x_j
\end{eqnarray}
for all $j$.  If $j \in \pm$ then
\begin{eqnarray}
\label{partial-signed-equatio}
 \sum_{i \in \mp} x_i K_{i,j} =-2x_j + \sum_{i=1}^{r} x_i K_{i,j} = (c-2) x_j 
 \end{eqnarray}
Recall that $c_i = 1$ if $i\in +$ and $c_i=-1$ if $i\in -$. One obtains for $j \in \pm$
\begin{eqnarray}
\label{signed-eigenvector-equation}
\sum_{i=1}^{r} c_i \cdot  x_i  K_{i,j}= \pm 2x_j\mp \sum_{i \in \mp } x_i K_{i,j}=(\pm 2 \mp (c-2)) x_{j}=c_j \cdot (4-c)  \cdot x_j
\end{eqnarray}
Define
$$q_{\pm}=\sum_{i \in  \pm} x_i \xi_i$$
From Equation (\ref{opposite-equation}),  Equation (\ref{summation-equation}), and Equation (\ref{eigenvector-equation}) it follows that 
\begin{eqnarray*}
\sigma_{+}(q_{-})+\sigma_{-}(q_{+})&=& q_{-}+ \sum_{i \in -} x_i \sum_{j =1}^{r}(2 \delta_{i,j}-K_{i,j}) \xi_j +q_{+}+ \sum_{i \in +} x_i \sum_{j=1}^{r}(2 \delta_{i,j}-K_{i,j}) \xi_j \\
&=&q_{-}+q_{+}+\sum_{j=1}^{r} (2-c)  x_j \xi_j
\end{eqnarray*}
From Equation (\ref{opposite-equation}), Equation (\ref{summation-equation}), and Equation (\ref{signed-eigenvector-equation}) it follows that
\begin{eqnarray*}
\sigma_{+}(q_-)-\sigma_{-}(q_{+})&=&q_{-}+ \sum_{i \in -} x_i \sum_{j=1}^{r}(2 \delta_{i,j}-K_{i,j})\xi_j -q_{+}- \sum_{i \in +} x_i \sum_{j=1}^{r}(2 \delta_{i,j}-K_{i,j}) \xi_j \\
&=&q_{-}-q_{+} + \sum_{j=1}^{r} c_j \cdot (2-c) x_j \xi_j
\end{eqnarray*}
Therefore 
$$\sigma_{+}(q_{-})=q_{-}+ (2-c) \cdot q_{+}$$
and
$$\sigma_{-}(q_{+})=q_{+}+(2-c) \cdot q_{-}$$
Since clearly $\sigma_{\pm}(q_{\pm})=-q_{\pm}$, it follows that the action of $\sigma = \sigma_{+}  \sigma_{-}$ on the $q_{\pm}$-plane with respect to the basis $\{q_{+},q_{-}\}$ is described by the matrix
\begin{eqnarray}
\label{M-equation}
M=\begin{bmatrix}
(2-c)^2-1& c-2\\
2-c & -1
\end{bmatrix}
\end{eqnarray}
To proceed,  we relate the spectrum of the Carter matrix $K$ to the spectrum of the Weyl group element $\sigma$. The finite-order endomorphism $\sigma$ of root space is diagonalizable, and hence so is
\begin{align}
\label{square-equation}
(\sigma_{+} + \sigma_{-})^2=2+ \sigma_+ \sigma_{-} + \sigma_{-} \sigma_{+}=2+\sigma+\sigma^{-1} 
\end{align}
Let $\{\mu_1,\cdots,\mu_{r}\}$ denote the multi-set of eigenvalues of $\sigma$. Since the Killing form is symmetric and non-degenerate, the Carter matrix $K$ is symmetrizable and hence diagonalizable.  Let $\{\lambda_1,\cdots,\lambda_r\}$ denote its multi-set of eigenvalues.  We have shown earlier that for all $i$
$$(\sigma_{+} + \sigma_{-})(\xi_i)  =\sum_{j=1}^{r} (2\delta_{i,j}-K_{i,j})\xi_{j}$$
Hence,  $\sigma_{+} + \sigma_{-}$ is a diagonalizable endomorphism of root space.  Together with Equation (\ref{square-equation}) this yields two descriptions of the spectrum of $(\sigma_{+} + \sigma_{-})^2$,  and hence an equality of multi-sets
\begin{eqnarray}
\label{multiset-equation}
\{(2-\lambda_1)^2,\cdots,(2-\lambda_{r})^2 \} = \{2+\mu_{1}+\mu_{1}^{-1}, \cdots, 2+ \mu_{r} + \mu_{r}^{-1}\}
\end{eqnarray}
Choose an arbitrary $\mu_{j}$ and write this root of unity as $\mu_{j}= e^{ i \theta}$ with $\theta$ in $[0,2\pi)$.  Then
\begin{eqnarray}
\label{c-equation}
2+2\cos \theta = (2-c)^2
\end{eqnarray}
where $c$ is an eigenvalue of $K$.  The two roots of this quadratic equation for 
$c$ are $2\pm 2\cos (\theta/2)$ and we now know that at least one of them is an 
eigenvalue.  By Equation (\ref{signed-eigenvector-equation}), if $c$ is an 
eigenvalue of $K$, so is $4-c$. It follows that in fact $2\pm 2\cos (\theta/2)$ are both 
eigenvalues of $K$. If $\mu_{j} \ne \pm 1$, then $\overline{\mu}_{j}=\mu_{j}^{-1}
=e^{i (2\pi -\theta)}$ is a second eigenvalue and
$$\{2+2\cos (\theta/2),2- 2\cos (\theta /2)\} =\{2-2\cos (\theta/2),2- 2\cos ((2\pi - \theta )/2)\}$$
is as predicted by the theorem. If $\mu_j=e^{i\pi}=-1$, then Equation (\ref{c-equation}) implies that $2=2-2\cos(\pi/2)$ is an eigenvalue of $K$.  Finally, $\mu_j=1$ does not happen: This follows from our assumption $l(\sigma)=r$, by comparing with the eigenvalues listed by Carter in \cite{CAR} (Table 3).  This proves the eigenvalue part of the Carter-Weyl correspondence.  

Consider now to the comparison of the eigenvectors. Let $e^{i\theta}$ be an eigenvalue of $\sigma$ and choose $c = 2 - 2 \cos(\theta/2)$ for the eigenvalue of the Carter matrix. With this choice of $c$, the eigenvalues of the matrix $M$ in Equation (\ref{M-equation}) are $e^{\pm i\theta}$.  Calculating the right eigenvector of $M$ corresponding to $e^{i \theta}$ yields the right eigenvector of $\sigma$ given by
\begin{eqnarray}
\label{Lambda-equation}
\Lambda=e^{i \frac{\theta}{4}} q_{+} + e^{-i\frac{\theta}{4}} q_{-}= \sum_{j=1}^{r} x_j e^{c_j i \frac{\theta}{4}} \cdot \xi_j
\end{eqnarray}

As shown before,  the Carter roots $\xi_j$ are linearly independent.  It follows that for each fixed eigenvalue of $K$,  linearly independent eigenvectors of the Carter matrix yield via Equation (\ref{Lambda-equation}) linearly independent eigenvectors for $\sigma$: The multiplying factor $e^{c_j i \frac{\theta}{4}}$ is constant for each $j$.  Hence,  one obtains the claimed isomorphism of eigenspaces in the Carter-Weyl correspondence.
\end{proof}
A crucial consequence of the theorem is the determination of the inner products of $\Lambda$ with all the Carter roots. 
\begin{cor}
\label{main-corollary}
Let $\Lambda$ in $\mathfrak h$ be the eigenvector with eigenvalue $e^{i\theta}$ corresponding to $(x_1,\cdots,x_{r})$ under the Carter-Weyl correspondence.  Then, after scaling $\Lambda$, one obtains for all $j$
$$\Lambda \cdot \xi_j = c_j   e^{-c_{j}i \frac{\theta}{4}} \cdot \xi_{j}^{2} x_j$$
In particular,  the values $|\Lambda \cdot \xi_j|$ yield the absolute values of a  right eigenvector of the Carter matrix. 
\end{cor}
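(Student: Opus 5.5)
The plan is to compute $\Lambda\cdot\xi_j$ directly from the explicit formula (\ref{Lambda-equation}) for $\Lambda$ in the Carter-Weyl correspondence (Theorem \ref{Carter-Weyl-theorem}), and then use the left eigenvector equation (\ref{eigenvector-equation}) for $K$.

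\textbf{Step 1: expand the inner product.} Write $\Lambda=\sum_i x_i e^{c_i i\theta/4}\xi_i$, where $(x_1,\cdots,x_r)$ is a left eigenvector of $K$ with eigenvalue $c=2-2\cos(\theta/2)$. By the definition of the Carter matrix, $\xi_i\cdot\xi_j=\tfrac12 \xi_j^2 K_{i,j}$, so
$$\Lambda\cdot\xi_j=\frac{\xi_j^2}{2}\sum_{i=1}^r x_i e^{c_i i\theta/4}K_{i,j}.$$

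\textbf{Step 2: split the sum by color.} Fix $j\in\pm$ and separate the terms according to the bi-coloring.
\begin{itemize}
\item The term $i=j$ contributes $2x_je^{\pm i\theta/4}$, since $K_{j,j}=2$.
\item The terms with $i\neq j$ of the same color vanish, by mutual orthogonality of the roots in that color class.
\item The terms of the opposite color all carry the common phase $e^{\mp i\theta/4}$. By (\ref{partial-signed-equatio}) their sum is $\sum_{i\in\mp}x_iK_{i,j}=(c-2)x_j=-2\cos(\theta/2)x_j$.
\end{itemize}
Hence for $j\in +$,
$$\Lambda\cdot\xi_j=\frac{\xi_j^2x_j}{2}\left(2e^{i\theta/4}-(e^{i\theta/2}+e^{-i\theta/2})e^{-i\theta/4}\right)=\xi_j^2x_j\cdot i\sin(\theta/2)\,e^{-i\theta/4}.$$
The same computation for $j\in -$ gives $-\xi_j^2x_j\cdot i\sin(\theta/2)\,e^{i\theta/4}$. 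In both cases this equals $i\sin(\theta/2)\,c_je^{-c_ji\theta/4}\xi_j^2x_j$.

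\textbf{Step 3: rescale.} To obtain the stated formula we divide $\Lambda$ by $i\sin(\theta/2)$. This is the one point needing care: we must have $\sin(\theta/2)\neq 0$, i.e.\ $\theta\neq 0$. This holds because, as shown in the proof of Theorem \ref{Carter-Weyl-theorem}, $1$ is not an eigenvalue of $\sigma$ when $l(\sigma)=r$.

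\textbf{Step 4: identify a right eigenvector.} Set $y_j=\xi_j^2x_j$. Using the symmetry of the Killing form, $K_{i,j}\xi_j^2=2\,\xi_i\cdot\xi_j=K_{j,i}\xi_i^2$. Therefore
$$\sum_jK_{i,j}y_j=\xi_i^2\sum_jx_jK_{j,i}=c\,\xi_i^2x_i=c\,y_i,$$
so $y$ is a right eigenvector of $K$. Since $K$ is real with real eigenvalue $c$, the vector $x$, and hence $y$, may be taken real. By Step 2, $|\Lambda\cdot\xi_j|=|y_j|$ after the rescaling, which gives the final claim.

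The only real obstacle is the phase bookkeeping in Step 2. The key is the identity $2e^{i\theta/4}-2\cos(\theta/2)e^{-i\theta/4}=2i\sin(\theta/2)e^{-i\theta/4}$.
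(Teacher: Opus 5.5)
Your proposal is correct and follows the paper's own proof: expand $\Lambda\cdot\xi_j$ using Equation (\ref{Lambda-equation}), handle the opposite-colour terms with Equation (\ref{partial-signed-equatio}), and simplify to $i\sin(\theta/2)\,c_je^{-c_ji\theta/4}\xi_j^2x_j$. Your Steps 3 and 4 usefully spell out two points the paper leaves implicit: that $\sin(\theta/2)\neq 0$ because $1$ is not an eigenvalue of $\sigma$, and that $y_j=\xi_j^2x_j$ is a right eigenvector via $K_{i,j}\xi_j^2=K_{j,i}\xi_i^2$. Your remark that $x$ may be taken real is unnecessary, since $|y_j|$ are the absolute values of a right eigenvector's entries even when $y$ is complex.
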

\begin{proof}
Let $t=e^{i \frac{\theta}{4}}$. Using Equation (\ref{partial-signed-equatio}) and Equation (\ref{Lambda-equation}),  one obtains for $j \in \pm$ 
\begin{align*}
\Lambda \cdot \xi_j &=  \xi_{j}^{2}  x_j t^{\pm 1}+ \frac{\xi_{j}^2 t^{ \mp 1}}{2}  \cdot \sum_{s \in \mp } x_s \cdot K_{s,j}  \\[0.07in]
&= \left( t^{\pm 1}-(t^{2}+t^{-2}) t^{\mp 1}/2\right ) \cdot \xi_{j}^{2} x_{j}\\[0.07in]
&=  i\sin \left (\frac{ \theta}{2} \right ) c_j e^{-c_{j} i \frac{\theta}{4}} \cdot  \xi_{j}^{2} x_j 
\end{align*}
This implies the first part of the corollary.  After scaling $\Lambda$ one obtains 
$|\Lambda \cdot \xi_j| = \left | \xi_{j}^2 x_j \right |$.  Hence,  whereas the $x_j$'s are the entries of a left eigenvector of the Carter matrix, the absolute values $|\Lambda \cdot \xi_j|$ are the absolute values of the entries of a right eigenvector. 
\end{proof}

In principle, the Carter-Weyl correspondence allows to answer Question \ref{main-question} and in particular the calculation of the Toda-Weyl mass spectrum: Corollary \ref{main-corollary} describes how $\Lambda$ pairs with a collection of Carter roots. Since Carter roots give a basis of root space, this determines how $\Lambda$ pairs with any root, in particular with the orbit representatives $\gamma_i$.  The corresponding formulas will be particularly nice if there is a close relation between the Carter roots and the orbit representatives.  

If $\sigma$ is a Coxeter element,  Kostant shows the following in \cite{KOS}: Let $\alpha_1,\cdots,\alpha_{r}$ be a collection of simple roots and as before let $c_i=\pm 1$ according to a choice of bi-coloration of the simple roots. Consider the Coxeter element 
$$\sigma=\prod_{c_{i}=1} r_{\alpha_{i}} \prod_{c_{i}=-1} r_{\alpha_{i}}$$
Kostant showed that the number of positive roots that get mapped via $\sigma$ to a negative root is the rank $r$ of $\mathfrak g$.  Further,  each orbit of $\langle \sigma \rangle$ on the set of roots contains exactly one of these ``sign-changer'' roots.  Using the fact that off-diagonal entries of the Cartan matrix are non-positive one can show, see \cite{BS} (Lemma 3.9), that $\pm \alpha_i$ is either a sign-changer or the image under $\sigma$ of a sign-changer. Further,  if the sign is chosen according to the bi-coloration of the Dynkin diagram, then this dichotomy corresponds exactly to the color of each root.  It follows that all $c_i \alpha_i$ lie in distinct orbits and in fact this gives all the orbits, since the number of roots is $hr$ where $h$ is the Coxeter number.  Since for any root $\beta$ one has $r_{-\beta}=r_{\beta}$,  it follows that there is a choice of Carter roots that are in fact a collection of orbit representatives. 

Some care is required to obtain generalizations of this relation for other Weyl group elements. To illustrate the point we discuss the conjugacy class $\textrm{E}_{6}(a_1)$. 

Let $\alpha_1,\cdots,\alpha_6$ be a choice of simple roots, indexed as in \cite{BOURBAKI}.  One possible representative of the $\textrm{E}_{6}(a_1)$ Carter diagram is
$$\xymatrix{
\alpha_2\ar@{.}[r] \ar@{-}[d] & \alpha_2+ \alpha_3 + \alpha_4  \ar@{-}[r] \ar@{-}[d]  & \alpha_1  \\
\alpha_4 \ar@{-}[r] & \alpha_5   \ar@{-}[r] & \alpha_6
}$$
The corresponding Weyl group element is
$$\sigma=r_{\alpha_{1}}r_{\alpha_{2}} r_{\alpha_{5}} r_{\alpha_4} r_{\alpha_{6}} r_{\alpha_{2}+\alpha_{3}+\alpha_{4}}$$
It turns out that the signed Carter roots (where the sign is chosen according to a bi-coloration of the Carter roots) do not lie in distinct orbits: Otherwise, $\alpha_2$ and $-\alpha_4$ should lie in distinct orbits, however,  the orbit of $\alpha_2$ is given by

\begin{align*}
 \{ &\alpha_{2}, -\alpha_{1}-\alpha_{3},-\alpha_{2}-\alpha_{3}-\alpha_{4}-\alpha_{5}, -\alpha_{2}-\alpha_{4}-\alpha_{5}-\alpha_{6}, -\alpha_{4}, \alpha_{2}+\alpha_{4}+\alpha_{5},\\[0.08in]
&\alpha_{4}+\alpha_{5}+\alpha_{6}, \alpha_{1}+\alpha_{3}+\alpha_{4}, \alpha_{3}\}
\end{align*}

Nonetheless, there is a different representative of the conjugacy class $\textrm{E}_{6}(a_1)$ that has a better relation between Carter roots and orbit structure.  Consider a second representative of the Carter diagram,  given by
$$\xymatrix{
\alpha_2\ar@{-}[r] \ar@{.}[d] &\alpha_3 + \alpha_4  \ar@{-}[r] \ar@{-}[d]  & \alpha_1  \\
\alpha_2+\alpha_4 \ar@{-}[r] & \alpha_5   \ar@{-}[r] & \alpha_6
}$$
and define 
\begin{align}
\label{another-equation}
\sigma=r_{\alpha_1}r_{\alpha_{2}} r_{\alpha_{5}} r_{\alpha_{6}} r_{\alpha_{2}+\alpha_{4}} r_{\alpha_{3}+\alpha_{4}}
\end{align}
A direct calculation yields the following eight orbits $\mathcal O_{1},\cdots, \mathcal O_{8}$ for the action of the cyclic group $\langle \sigma \rangle$ on the set of roots: 

\begin{align*}
\mathcal O_1 &= &\{ & \alpha_1, \alpha_2+\cdots +\alpha_5, \alpha_1+\alpha_3+\cdots+\alpha_6, \alpha_2+\alpha_3+2\alpha_4+\alpha_5, \alpha_6,-\alpha_5-\alpha_6,\\[0.08in]
&  & &-\alpha_1-\alpha_2-\alpha_3-2\alpha_4-\alpha_5,-\alpha_2-\cdots - \alpha_6,-\alpha_1-\alpha_3-\alpha_4\}\\[0.08in]
\mathcal O_2&= &\{  &\alpha_2,  \alpha_1+\alpha_3,\alpha_3+\alpha_4+\alpha_5,\alpha_4+\alpha_5+\alpha_6,\alpha_2+\alpha_4,
-\alpha_4-\alpha_5,\\[0.08in]
& & &-\alpha_2-\alpha_4-\alpha_5-\alpha_6,-\alpha_1-\cdots-\alpha_4,-\alpha_3\} \\[0.08in]
\mathcal O_3&=&\{ &\alpha_4,-\alpha_1-\alpha_3-\alpha_4-\alpha_5,-\alpha_2-\alpha_3-2\alpha_4-2\alpha_5-\alpha_6,\\[0.08in]
& & &-\alpha_1-\alpha_2-\alpha_3-2\alpha_4-\alpha_5-\alpha_6,
-\alpha_2-\alpha_3-\alpha_4,\alpha_2+\alpha_4+\alpha_5,\\[0.08in]
& & &\alpha_1+\cdots+\alpha_6,\alpha_1+\alpha_2+2\alpha_3+2\alpha_4+\alpha_5,\alpha_3+\alpha_4+\alpha_5+\alpha_6\} \\[0.08in]
\mathcal O_4 &=& \{ &\alpha_5,\alpha_1+\alpha_2+\alpha_3+2\alpha_4+2\alpha_5+\alpha_6,\alpha_1+2\alpha_2+2\alpha_3+3\alpha_4+2\alpha_5+\alpha_6,\\[0.08in]
& & & \alpha_1+\alpha_2+2\alpha_3+2\alpha_4+\alpha_5+\alpha_6,\alpha_3+\alpha_4,-\alpha_1-\cdots-\alpha_5,\\[0.08in]
& & &  -\alpha_1-\alpha_2-2\alpha_3-2\alpha_4-2\alpha_5-\alpha_6,
 -\alpha_1-\alpha_2-2\alpha_3-3\alpha_4-2\alpha_5-\alpha_6,\\[0.08in]
 & & &-\alpha_2-\alpha_3-2\alpha_4-\alpha_5-\alpha_6\}
 \end{align*}
 and $\mathcal O_{5} = -\mathcal O_{1}$, $\mathcal O_{6} = -\mathcal O_{2}$, $
\mathcal O_7 = - \mathcal O_3$, $\mathcal O_8 = - \mathcal O_4$.  Since $\sigma$ from Equation (\ref{another-equation}) can be written as
$$\sigma=r_{\alpha_1}r_{\alpha_{2}} r_{\alpha_{5}} r_{-\alpha_{6}} r_{-(\alpha_{2}+\alpha_{4})} r_{-(\alpha_{3}+\alpha_{4})}$$
it follows that there are Carter roots for $\sigma$ that lie in distinct orbits: $\xi_1= \alpha_1  , \;\; \xi_2 = \alpha_2  ,\;\; \xi_3= \alpha_5  , \; \; \xi_4= - \alpha_6 , \;\; \xi_5 = - (\alpha_2 + \alpha_4) , \;\; \xi_6 = -(\alpha_3 + \alpha_4)$. 
Having illustrated some of the subtle relations between Carter roots and orbit representatives, we focus for the remainder of this work on an especially interesting family of conjugacy classes:

The Weyl group conjugacy class of Coxeter elements is an example of a regular primitive conjugacy class.  Such conjugacy classes are known to be the building blocks of all regular conjugacy classes.  Apart from the Coxeter class, there are in fact not many others for complex simple Lie algebras: Apart from a few cases for the exceptional algebras, there is exactly one infinite family of regular primitive non-Coxeter conjugacy classes.  In the notation by Carter \cite{CAR},  it is denoted by $\textrm{D}_{2n}(a_{n-1})$.  

\section{Toda-Weyl mass spectrum for $\textrm{D}_{2n}(a_{n-1})$}
\label{example-section}
In this section we prove Theorem \ref{type-D-theorem}.  A representative for the conjugacy class $\textrm{D}_{2n}(a_{n-1})$ is given by Carter in \cite{CAR} (Proposition 25). We supplement this with the description of the orbit structure. Throughout, our indexing convention for simple roots is as in \cite{BOURBAKI}.
\begin{lem}
\label{orbit-lemma}
Consider the element $\sigma=\sigma_{+} \sigma_{-}$ in the Weyl group of type $\textrm{\emph{D}}_{2n}$ where
$$\sigma_{+}= r_{\alpha_{1}}r_{\alpha_{3}} \cdots r_{\alpha_{2n-1}} r_{\beta }\;\;,\;\; \sigma_{-}=r_{\alpha_{2}} r_{\alpha_{4}} \cdots r_{\alpha_{2n-2}}$$
for $\beta=\alpha_{n}+2\sum_{i=n+1}^{2n-2} \alpha_i + \alpha_{2n-1}+\alpha_{2n}$.  Then $\sigma$ is a representative of the conjugacy class $\textrm{\emph{D}}_{2n}(a_{n-1})$, the number of orbits of $\langle \sigma \rangle$ acting on the set of roots is $4n-2$, and a list of orbit representatives is
\begin{eqnarray*}
&& \alpha_1 \\[0.07in]
&& \vdots \\[0.07in]
&&\alpha_{2n-1} \\[0.07in]
&& \beta \\[0.07in]
&& \eta_i := \sum_{t=n-i}^{n} \alpha_{t} \;\;\; \textrm{ for } 1\le i \le n-1 \\[0.07in]
&& \nu_i:= \sum_{t=n-i}^{n} \alpha_{t} + \alpha_{n+1} \;\;\; \textrm{ for } i=0 \textrm{ as well as } 2\le i \le n-1
\end{eqnarray*}
\end{lem}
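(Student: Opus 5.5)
Throughout we work in the standard model of $\textrm{D}_{2n}$: $\alpha_i=e_i-e_{i+1}$ for $1\le i\le 2n-1$ and $\alpha_{2n}=e_{2n-1}+e_{2n}$, so the roots are $\pm e_i\pm e_j$ with $i<j$. There are $2n(2n-1)\cdot 2=4n(2n-1)$ of them. In these coordinates
$$\beta=e_n-e_{n+1}+2(e_{n+1}-e_{2n-1})+(e_{2n-1}-e_{2n})+(e_{2n-1}+e_{2n})=e_n+e_{n+1}.$$

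The plan is to first compute $\sigma$ explicitly as a signed permutation. Each reflection $r_{\alpha_i}$ with $i\le 2n-1$ is the transposition $(i,i+1)$. The reflection $r_\beta$ sends $e_n\mapsto -e_{n+1}$ and $e_{n+1}\mapsto -e_n$. Hence:
\begin{itemize}
\item $\sigma_-=(2\,3)(4\,5)\cdots(2n-2\;2n-1)$;
\item $\sigma_+=(1\,2)(3\,4)\cdots(2n-1\;2n)\, r_\beta$.
\end{itemize}
The roots used are pairwise orthogonal within each group: $\beta$ is orthogonal to every odd $\alpha_i$ except possibly $\alpha_{n-1},\alpha_n,\alpha_{n+1}$, and one of $n,n\pm1$ is even, so this must be checked by parity. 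I would verify this and then compose the permutations.

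The expected outcome is that $\sigma$ acts on $\{\pm e_1,\dots,\pm e_{2n}\}$ as a single negative cycle of length $2n$ (say with cycle type $[\bar n\,\bar n]$ or $[\overline{2n}]$ suitably). Conjugacy classes in $W(\textrm{D}_{2n})$ are classified by signed cycle type. Carter's $\textrm{D}_{2n}(a_{n-1})$ corresponds to the signed cycle type consisting of two negative $n$-cycles, with characteristic polynomial $(t^n+1)^2$. So the precise check is that $\sigma$ decomposes into two negative $n$-cycles, giving order $2n$ and the eigenvalue $e^{2\pi i/2n}$ with multiplicity $2$. Since this is exactly Carter's representative in \cite{CAR} (Proposition 25), membership in the class can alternatively be cited; I would still confirm the cycle type directly.

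Next I would count orbits. Because $\sigma$ is regular of order $2n$ with no eigenvalue $1$ on any root's span, every orbit has size exactly $2n$: no root is fixed by a proper power, by regularity (Springer). Hence the number of orbits is
$$\frac{4n(2n-1)}{2n}=4n-2.$$
Concretely, label the two negative $n$-cycles $A$ and $B$. Roots $\pm e_i\pm e_j$ with $i,j$ both in $A$, both in $B$, or one in each then split into orbits of length $2n$ by a direct signed-index computation, which confirms the count.

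It remains to show that the listed $4n-2$ roots lie in pairwise distinct orbits. The list has
$$(2n-1)+1+(n-1)+(n-2+1)=4n-2$$
entries, so distinctness suffices. For each root I would record an orbit invariant:
\begin{itemize}
\item whether the indices $i,j$ lie in the same cycle or different cycles;
\item the cyclic distance between $i$ and $j$ along the cycle, tracked together with the sign pattern modulo the negative-cycle twist.
\end{itemize}
Writing $\eta_i=e_{n-i}-e_{n+1}$, $\nu_i=e_{n-i}-e_{n+2}$ and $\alpha_t=e_t-e_{t+1}$ in coordinates, I would compute these invariants for each representative and check that they are all different.

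The main obstacle is the bookkeeping. The cycles of $\sigma$ interleave indices in a zigzag, namely $1\to2\to4\to\cdots$ in some order determined by composing the two involutions. The cyclic distances of the listed roots must be tracked carefully, especially near $n,n+1$, where $r_\beta$ introduces the sign flip. I would first work out $n=2$ and $n=3$ by hand to pin down the cycle structure and the invariant, and only then write the general-$n$ argument.
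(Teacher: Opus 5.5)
Your strategy is the paper's: pass to signed permutations, identify the class by signed cycle type via Carter's Proposition 25, get $4n(2n-1)/2n=4n-2$ orbits from Springer's regularity, and separate the listed roots by orbit invariants built from the cycles of $\sigma$. However, the proposal stops where the actual content of the proof begins, and one of the checks you announce is false.

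\textbf{The orthogonality claim fails for even $n$.} We have $\beta=e_n+e_{n+1}$, so $\beta\cdot\alpha_n=0$ but $\beta\cdot\alpha_{n\pm1}=\pm1$. For even $n$ the roots $\alpha_{n\pm1}$ are factors of $\sigma_+$. This does not hurt the lemma: $\sigma$ is just the ordered product as written, and for even $n$ it is $r_\beta$ that commutes with $\sigma_-$. But you should not try to prove the claim.

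\textbf{The cycle structure is never determined.} Your first guess, a single negative $2n$-cycle, cannot lie in $W(\mathrm{D}_{2n})$, since that group contains only elements with an even number of negative cycles. The fact everything rests on is that $\sigma$ preserves the blocks $A=\{1,\dots,n\}$ and $B=\{n+1,\dots,2n\}$ and is a negative $n$-cycle on each. The paper writes both cycles down explicitly. A quick derivation:
\begin{enumerate}
\item Commuting orthogonal factors makes $r_{\alpha_n}$ and $r_\beta$ adjacent.
\item Their product is $-1$ on $\mathrm{span}(e_n,e_{n+1})$.
\item Hence $\sigma$ is the bipartite product of the transpositions $(i\;\,i{+}1)$, $i\neq n$, with the sign change on $e_n,e_{n+1}$ inserted.
\item That is, $\sigma$ is an $\mathrm{A}_{n-1}$ Coxeter element on each block, twisted by one sign.
\end{enumerate}

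\textbf{Your distinctness invariant is incomplete.} The ``cyclic distance between $i$ and $j$ along the cycle'' applies only to $\alpha_1,\dots,\alpha_{n-1}$ and $\alpha_{n+1},\dots,\alpha_{2n-1}$. Even there it must be taken up to sign. A root is an unordered pair of signed indices, and reversing the pair replaces $d$ by $-d$ modulo $2n$. The paper uses $d(i,-(i+1))$ here.

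The other $2n$ representatives $\alpha_n,\beta,\eta_i,\nu_i$ each have one index in each block, so no within-cycle distance exists. The right invariant is a relative phase. Take the unique $k$ modulo $2n$ such that $\sigma^k$ sends the $B$-component to $-e_{n+1}$, and record the resulting signed index in $A$. This gives:
\begin{itemize}
\item $\eta_i\mapsto n-i$;
\item $\alpha_n\mapsto n$;
\item $\beta\mapsto -n$, using $\sigma^n=-1$.
\end{itemize}
What remains to check is that the $\nu_i$ produce the other values $-1,\dots,-(n-1)$. This is exactly the paper's step of ``looking at the action on $n+1$''.

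Carrying out these computations for general $n$, not just for $n=2,3$, is the proof. Until that is done, the proposal is an outline.
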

\begin{proof}
It will be convenient to express the Weyl group elements in terms of permutations.  From this perspective the roots are $\pm e_{i} \pm e_{j}$ and the Weyl group elements acts via signs and permutations.  Note that $\beta = e_{n}+e_{n+1}$ and the corresponding reflection is given by
$$r_{\beta}=(n \; , \;  -(n+1))$$
For a cycle $(x_1, \cdots, x_a, -x_1,\cdots,-x_a)$ we write for simplicity $(x_1, \cdots, x_a)^{-}$.  If $n$ is odd, define
\begin{align*}
\sigma_{1} &=(1\; ,\; 2 \; , \;4 \; ,  \; \cdots \; , \;  (n-1) \; ,\;  -n \; , \; -(n-2) \; , \; \cdots \; , \; -3 )^{-} \\[0.07in]
\sigma_{2} &= (n+1 \;, \; n+3 \; , \; \cdots \; , \; 2n \;,  \; 2n-1 \;, \; 2n-3 \;, \; \cdots \;,  n+2)^{-}
\end{align*}
If $n$ is even, define
\begin{align*}
\sigma_{1} &= (1\; ,\; 2 \; , \;4 \; ,  \; \cdots \; , \;  n \; ,\;  - (n-1) \; , \; -(n-3) \; , \; \cdots \; , \; -3 )^{-} \\[0.07in]
\sigma_{2} &=(n+1 ,  -(n+2)  ,  -(n+4) , \cdots  ,  -2n ,   -(2n-1) ,  -(2n-3) ,  \cdots ,   -(n+3))^{-}
\end{align*}
An explicit calculation yields the disjoint cycle decomposition $$\sigma= \sigma_1 \sigma_2$$ 
In particular,  in the notation of \cite{CAR} (Section 7), the signed cycle-type of $\sigma$ is $[\overline{n} \; \overline{n}]$. By \cite{CAR} (Proposition 25),  this corresponds to the conjugacy class $\textrm{D}_{2n}(a_{n-1})$, as desired.

We now compute orbit representatives.  As discussed by Reeder \cite{REE},  for regular conjugacy classes it follows from work of Springer \cite{SPR} (Proposition 4.1) that every orbit of $\sigma$ has exactly $\textrm{ord}(\sigma)$ elements.  The  total number of roots is known to be $h \cdot \textrm{rank }\mathfrak g$ where $h=4n-2$ is the Coxeter number. It follows that the number $s$ of orbits of $\sigma$ is given by 
$$\frac{h \cdot r}{\textrm{ord}(\sigma)}=\frac{(4n-2)\cdot 2n}{2n}=4n-2$$
Hence it suffices to show that the Carter roots and $\eta_i$'s and $\nu_i$'s all lie in distinct orbits.  We write $\gamma \sim \delta$ if two roots $\gamma$ and $\delta$ are in the same orbit.  Note that $\eta_i=e_{n-i}-e_{n+1}$ and $\nu_i=e_{n-i}-e_{n+2}$.

From the explicit form of $\sigma_1$ and $\sigma_2$ it follows that
\begin{align}
\label{sigma-equation}
| \sigma^{k}(j)| \le n
\end{align}
for all $k$ and all $1\le j <n$, and
\begin{align}
\label{sigma-equation2}
| \sigma^{k}(j)| > n
\end{align}
for all $k$ and all $n<j\le 2n$.  Suppose now $\sigma^{k} \eta_i =\eta_j$.  From Equation (\ref{sigma-equation}) it follows that $\sigma_{2}^{k}(n+1)=n+1$. Hence $k$ is $0$ modulo $2n$ and $i=j$.  In the same manner, replacing $n+1$ by $n+2$,  one obtains $\nu_{i} \not \sim \nu_{j}$ if $i\ne j$.   Suppose now for contradiction that $\eta_{i} \sim \nu_{j}$.  Looking at the action on $n+1$ this implies $\sigma^{n-1}\eta_{i} = \nu_{j}$ if $n$ is odd and $\sigma^{n+1}\eta_{i} = \nu_{j}$ if $n$ is even.  Therefore $\sigma^{n\pm1}(n-i)$ would need to be of the form $n-j$ with either $j=0$ or $2\le j \le n-1$. However, the only $1\le a \le n-1$ such that $\sigma^{n\pm 1}(a)$ is positive, is $a=n-1$ with $n$ even, and $\sigma^{n-1}(n-1)=n$.  It follows that $\eta_{i} \not \sim \nu_{j}$ for all $i,j$.  

By Equation (\ref{sigma-equation}) and Equation (\ref{sigma-equation2}),  if $
\alpha_i=e_{i}-e_{i+1}$ is in the orbit of any $\eta_j$ or $\nu_{j}$ one needs $i=n
$.  Looking at the action on $n+1$,  clearly $\alpha_n \not \sim \eta_i$ for every $i
$.  If $\alpha_n \sim \nu_j$, then looking at the action on $n+1$, one sees that $
\sigma^{n-1}\alpha_n= \nu_j$ when $n$ is odd, and $\sigma^{n+1} \alpha_n=
\nu_j$ when $n$ is even. However, if $n$ is odd then $\sigma^{n-1}(n)=-(n-1)$ 
and if $n$ is even $\sigma^{n+1}(n)=n-1$.  It follows that indeed $\alpha_n$ is 
not in the orbit of any $\eta_j$ or $\nu_{j}$.

Consider now $\beta=e_{n}+e_{n+1}$. As for $\alpha_n$, one obtains $\beta \not \sim \eta_i$. If $\beta \sim \nu_j$, then looking at the action on $n+1$, one sees that that $\sigma^{-1}\beta= \nu_j$ when $n$ is odd, and $\sigma \beta=
\nu_j$ when $n$ is even.  If $n$ is odd, then $\sigma^{-1} \beta = -e_{n-1}-e_{n+2} \ne \nu_j$. If $n$ is even, then also $\sigma \beta =-e_{n-1}-e_{n+2} \ne \nu_j$.  In conclusion, $\beta$ is not in the same orbit as any of the $\eta_{i}$'s or $\nu_{j}$'s.  Furthermore,  by Equation (\ref{sigma-equation}) and Equation (\ref{sigma-equation2}),  $\beta \not \sim \alpha_i$ for $i\ne n$. Looking at the action on $n$, one also sees that there is no $k$ with $\sigma^{k} \beta= \alpha_n$.

To complete the proof of the Lemma, it remains to show that all the $\alpha_{i}$'s with $1\le i \le 2n-1$ lie in distinct orbits.  If $i <n$ and $j>n$ it follows from Equation (\ref{sigma-equation}) that $\alpha_{i} \not \sim \alpha_{j}$.  For $1\le a,b \le n$ there is a unique $k$ modulo $2n$ such that $\sigma^{k}(a)=b$ and we define $d(a,b):=k$.  As $i$ varies from $1$ to $n-1$ the values of $d(i, -(i+1))$ are exactly the $n-1$ values  $n+1 \; , \; n-2 \; , \; n+3 \; , \; n-4 \; , \; \cdots \; ,\; 3 \; , \; 2n-2 \; , 1$ if $n$ is odd,  and $n+1 \; , \; n-2 \; , \; n+3 ,\;   \; n-4 \; , \; \cdots \; , \; 2 \; , \; 2n-1$ if $n$ is even. In particular, if $i\ne j$ 
$$d(i,-(i+1)) \ne  d(j,-(j+1))$$ 
Since $d(\sigma^{t}a,\sigma^{t}b)=d(a,b)$ for all $1\le a,b \le n$ and all $t$, it follows that $\alpha_{i} \not \sim \alpha_{j}$ for $1\le i,j \le n-1$.  Suppose now $n+1 \le i,j < 2n$.  Then the values of $d(i,-(i+1))$ are exactly the $n-1$ values $2n-1 \; , \; 2 \; , 2n-3 \; , 4 \; , \; \cdots \; , n+2 \; , \; n-1 $ if $n$ is odd, and $1 \; , \; 2n-2 \; , \; 3 \; , \; 2n-4 \; , \; \cdots \; , \; n+2 \; ,\; n-1$ if $n$ is even.  As before,  one obtains $\alpha_{i} \not \sim \alpha_{j}$. Finally consider $\alpha_{n}=e_{n}-e_{n+1}$.  By Equation (\ref{sigma-equation}) and Equation (\ref{sigma-equation2}) it follows that $\alpha_{n} \not \sim \alpha_i$ for $i \ne n$.
\end{proof}

Having established the orbit structure,  we now answer Question \ref{main-question} for the conjugacy class $\textrm{D}_{2n}(a_{n-1})$.

\begin{proof}[Proof of Theorem \ref{type-D-theorem}]
Suppose the theorem is known for $\sigma$,  and $\sigma'=\mu\sigma \mu^{-1}$ is another element in the conjugacy class.  If $\gamma$ is an orbit representative for $\sigma$, then $\mu \gamma$ is an orbit representative for $\sigma'$, if $\Lambda$ is an eigenvector for $\sigma$ then $\mu \Lambda$ is a corresponding eigenvector for $\sigma'$. Since the Killing form is invariant under the Weyl group action it follows that $\mu \Lambda \cdot \mu \gamma=\Lambda \cdot \gamma$.  Hence, it suffices to prove the theorem for one element in the conjugacy class $\textrm{D}_{2n}(a_{n-1})$.  We therefore assume that $\sigma=\sigma_{+} \sigma_{-}$, where as in Lemma \ref{orbit-lemma}
\begin{align*}
\sigma_{+} &= r_{\alpha_{1}}r_{\alpha_{3}} \cdots r_{\alpha_{2n-1}} r_{\beta}\\[0.07in]
\sigma_{-}&=r_{\alpha_{2}} r_{\alpha_{4}} \cdots r_{\alpha_{2n-2}}
\end{align*}
for $\beta=\alpha_{n}+2\sum_{i=n+1}^{2n-2} \alpha_i + \alpha_{2n-1}+\alpha_{2n}$. The corresponding Carter diagram is given by
$$ \xymatrix @R=0.8pc @C=1.3pc {
&&& & \alpha_{n} \ar@{-}[dl]  \ar@{-}[dr]   & & &&& \\
\alpha_1 \ar@{-}[r] &\alpha_2 \ar@{-}[r] &  \cdots \ar@{-}[r]& \alpha_{n-1} &  & \alpha_{n+1}  \ar@{-}[r] & \cdots \ar@{-}[r] & \alpha_{2n-2} \ar@{-}[r] & \alpha_{2n-1} &  \\
&&&&\beta \ar@{-}[ul] \ar@{.}[ur] & &&&&
}$$

We order the Carter roots $\xi_1,\cdots, \xi_{2n}$ as $\alpha_1, \cdots, \alpha_n, \beta, \alpha_{n+1}, \cdots, \alpha_{2n-1}$.  Recall also that the roots involved in defining $\sigma_{+}$ have $c_i=1$ and the roots involved in $\sigma_{-}$ have $c_i=-1$.  One has
$$\eta_{i} = \sum_{t=n-i}^{n} \xi_{t}$$ 
as well as
$$\nu_{0}=\xi_{n}+\xi_{n+2} \;\; , \;\; \nu_{i}=\xi_{n+2}+ \sum_{t=0}^{i} \xi_{n-t} \;\;\; (\textrm{for } 2\le i \le n-1)$$

For $\zeta=e^{2\pi i /2n}$ we now describe the (two-dimensional) $\zeta$-eigenspace of $\sigma$ using the Carter-Weyl correspondence.  All entries of the Carter matrix $K$ can be read off the Carter diagram in complete analogy with the relation between Cartan matrices and Dynkin diagrams.  For example for $n=4$: 

$$K=\begin{bmatrix}
2&-1&0&0&0&0&0&0\\
-1&2&-1&0&0&0&0&0\\
0&-1&2&-1&-1&0&0&0\\
0&0&-1&2&0&-1&0&0\\
0&0&-1&0&2&1&0&0\\
0&0&0&-1&1&2&-1&0\\
0&0&0&0&0&-1&2&-1\\
0&0&0&0&0&0&-1&2\\
\end{bmatrix}$$

Let $t=e^{i \pi/4n}$.  Since the Carter matrix $K$ is symmetric, the task is to describe the right eigenvectors of $K$ with eigenvalue $2-2\cos(\frac{\pi}{2n
})=2-(t^2+t^{-2})$.  By \cite{BOU} (Table 1),  together with the Carter-Weyl correspondence,  we know that this eigenspace is two-dimensional.  Let $(x_1,\cdots,x_{2n})^{\textrm{T}}$ be an eigenvector.  Assume for the moment that $x_1\ne 0$, so after scaling we can set $x_1=\sin \left (\frac{\pi}{2n} \right )$.  The first row of the eigenvector equation yields
$$x_2=(t^2+t^{-2})x_1=\sin\left( \frac{2\pi}{2n}\right)$$
In this manner (exactly as for Lie algebras of type A) one obtains for all $1\le j \le n-1$
$$x_j = \sin \left (\frac{j \cdot \pi}{2n} \right )$$
From the way $\alpha_{n-1}$ pairs with the other Carter roots,  the eigenvector equation yields
$$x_n+x_{n+1}=1$$
Analogously, from $\alpha_n$, $\beta$, $\alpha_{n+1}$ respectively,  one obtains

\begin{align*}
x_{n+2}&=-x_{n-1}+(t^2+t^{-2})x_n\\[0.08in]
x_{n+2}&=x_{n-1}-(t^2+t^{-2})x_{n+1}\\[0.08in]
x_{n+3}&=-x_{n}+x_{n+1}+(t^2+t^{-2})x_{n+2}
\end{align*}
Let us make the Ansatz $x_{2n}=\sin\left(\frac{\pi}{2n} \right)$. By an analogous argument as before,  one obtains for all $n+2 \le i \le 2n$
$$x_i  = \sin\left(\frac{2n-i+1}{2n} \cdot \pi \right)$$
Then $x_n=1$ and $x_{n+1}=0$ satisfy all constraints and this yields the eigenvector
\begin{align*}
\textbf{v}_{1}&= (\sin \left (\frac{\pi}{2n} \right ), \cdots,\sin \left (\frac{(n-1) \pi}{2n} \right ),1,0,\sin \left (\frac{(n-1) \pi}{2n} 
\right ),\cdots ,\sin \left (\frac{\pi}{2n} \right ))^{\textrm{T}}
\end{align*}
Now make the Ansatz $x_{2n}=0$.  The eigenvector equation yields for all $n+2 \le i \le 2n$
$$x_i=0$$
In this case $x_n=x_{n+1}=1/2$ satisfy all constraints and this yields the eigenvector
\begin{align*}
\textbf{v}_{2} &= (\sin \left (\frac{\pi}{2n} \right ), \cdots,\sin \left (\frac{(n-1) \pi}{2n} \right ),\frac{1}{2},\frac{1}{2},0,\cdots ,  
0 )^{\textrm{T}}
\end{align*}
Since we know the eigenspace is $2$-dimensional, $\textbf{v}_1$ and $\textbf{v}_{2}$ is a basis.

Let $\Lambda$ correspond to $\textbf{v}_{1}$ under the Carter-Weyl correspondence. We now calculate the corresponding Weyl-Orbit values.  Since by Lemma \ref{orbit-lemma} all Carter roots $\xi_i$ lie in distinct orbits, we obtain from Corollary \ref{main-corollary} the $2n$ Weyl-Orbit values corresponding to the absolute values of the entries of $\textbf{v}_{1}$.

We claim that for $1\le k \le n-1$
\begin{align}
\label{pairing-equation}
\Lambda \cdot \eta_k =\frac{1}{2i}(t^{(-1)^{n-k+1}(2(n-k)-1)}-t^{(-1)^{n}(2n+1)})
\end{align}
For $k=1$ it follows from Corollary \ref{main-corollary} that
\begin{align*}
\Lambda \cdot \eta_1 &=\Lambda_{+} \cdot (\xi_{n-1}+\xi_{n}) \\[0.1in]
&=\frac{1}{2i} ((-1)^{n}t^{(-1)^{n-1}}(t^{2(n-1)}-t^{-2(n-1)})+(-1)^{n+1}t^{(-1)^{n}}(t^{2n}-t^{-2n})) \\[0.1in]
&=(-1)^{n} \frac{1}{2i}(-1)^{n}(t^{(-1)^{n}(2n-3)}-t^{(-1)^{n}2n+1})
\end{align*}
as predicted.  Suppose now $\Lambda_{+} \cdot \eta_{k}$ is as in Equation (\ref{pairing-equation}).  Then 
\begin{align*}
\Lambda \cdot \eta_{k+1} &=\Lambda_{+} \cdot (\xi_{n}+ \cdots + \xi_{n-k-1})\\[0.1in]
&=\Lambda_{+}\cdot \eta_k + \Lambda_{+} \cdot \xi_{n-k-1}\\[0.1in]
&=\frac{1}{2i}(t^{(-1)^{n-k+1}(2(n-k)-1)}-t^{(-1)^{n}(2n+1)}\\[0.1in]
&\;\;\; \;\;+(-1)^{n-k}t^{(-1)^{n-k+1}}(t^{2(n-k-1)}-t^{-2(n-k-1)}))\\[0.1in]
&=\frac{1}{2i}(t^{(-1)^{n-k}(2n-2k-3)} -t^{(-1)^{n}(2n+1)})
\end{align*}
Hence,  Equation (\ref{pairing-equation}) holds for all $1\le k \le n-1$.  Note that 
$$\left |\frac{t^{a}-t^{b}}{2i} \right |= \left |t^{-(a+b)/2}  \cdot \frac{t^{a}-t^{b}}{2i} \right |= \left |\sin\left ( \frac{a-b}{8n} \cdot \pi \right ) \right |$$
Let 
$$c_{n,k}= \begin{cases}
0 \;\;\; \textrm{ if $k\equiv n \mod 2$ }\\[0.1in]
(-1)^{n+1} \;\;\; \textrm{if $n \not \equiv k \mod 2$}
\end{cases}$$ 
It follows from Equation (\ref{pairing-equation}) that
\begin{align*}
\left | \Lambda \cdot \eta_{n-k} \right | = \left| \sin \left( \frac{(-1)^{n+1}n+(-1)^{k+1}k +c_{n,k}}{4n} \cdot \pi \right) \right |
\end{align*}
If $n$ is even, these absolute values as $k$ varies from $1$ to $n-1$ are
$$\sin\left( \frac{n}{4n} \cdot \pi \right ), \sin\left( \frac{n+2}{4n} \cdot \pi \right ),\sin\left( \frac{n-2}{4n} \cdot \pi \right ),\sin\left( \frac{n+4}{4n} \cdot \pi \right ),\sin\left( \frac{n-4}{4n} \cdot \pi \right)$$
$$\cdots, \sin \left(\frac{2n-2}{4n} \cdot \pi \right), \sin \left(\frac{2}{4n} \cdot \pi \right)$$
If $n$ is odd, one obtains the absolute values
$$\sin\left( \frac{n+1}{4n} \cdot \pi \right ), \sin\left( \frac{n-1}{4n} \cdot \pi \right ),\sin\left( \frac{n+3}{4n} \cdot \pi \right ),\sin\left( \frac{n-3}{4n} \cdot \pi \right ),  \cdots$$
$$\cdots, \sin \left(\frac{2n-2}{4n} \cdot \pi \right), \sin \left(\frac{2}{4n} \cdot \pi \right)$$
Hence, in both cases
$$\{|\Lambda \cdot \eta_{k} | \; \big |  \; 1\le k \le n-1 \} = \{\sin \left(\frac{\pi}{2n} \right ), \cdots, \sin \left(\frac{(n-1)\pi}{2n} \right )\}$$
We now calculate $|\Lambda \cdot \nu_{i}|$. For $2\le i \le n-1$ 
\begin{align*}
\Lambda \cdot \nu_{i} &=\Lambda \cdot \eta_{i} + \Lambda \cdot \alpha_{n+1} \\[0.07in]
& = \Lambda\cdot \eta_{i} +(-1)^{n} t^{(-1)^{n+1}} \cdot \frac{t^{2(n-1)}-t^{-2(n-1)}}{2i} 
\end{align*}
Therefore
$$\Lambda \cdot \nu_{n-k}= \frac{t^{(-1)^{k+1}(2k-1)}-t^{(-1)^{n}(2n+1)}+(-1)^{n} t^{(-1)^{n+1}}(t^{2(n-1)}-t^{-2(n-1)}) }{2i}$$
Since $t^{4n}=-1$, $\left | \Lambda \cdot \nu_{n-k} \right|$ is given by
\begin{align*}
 & \left  |t^{(-1)^{n+1}2n}    \frac{t^{(-1)^{k+1}(2k-1)}-t^{(-1)^{n}(2n+1)}+(-1)^{n} t^{(-1)^{n+1}}(t^{2(n-1)}-t^{-2(n-1)})} {2i} \right | \\[0.17in]
= &\left| \frac{t^{(-1)^{k+1}(2k-1)+(-1)^{n+1}2n}-t^{(-1)^{n}} +(-1)^{n}(t^2+t^{-2})(-1)^{n}t^{(-1)^{n+1}}}{2i} \right |   \\[0.17in]
=&  \left| \frac{t^{(-1)^{k+1}(2k-1)+(-1)^{n+1}2n}+t^{(-1)^{n+1}3} }{2i} \right | 
\end{align*}
Note that
$$\left |\frac{t^{a}+t^{b}}{2i} \right |= \left |t^{2n-(a+b)/2}  \cdot \frac{t^{a}+t^{b}}{2i} \right |= \left |\sin\left ( \frac{4n+a-b}{8n} \pi \right ) \right |$$
One obtains for $1\le k \le n-2 $
$$\left | \Lambda \cdot \nu_{n-k} \right| = \left | \sin \left (\frac{4n+(-1)^{n+1}(2n-3)+(-1)^{k+1}(2k-1)}{8n} \cdot \pi  \right ) \right | $$
If $n$ is even,  one obtains the values
\begin{align*}
\sin \left(\frac{\frac{n}{2}+1}{2n} \cdot \pi \right), \sin \left(\frac{\frac{n}{2}}{2n} \cdot \pi \right),\sin \left(\frac{\frac{n}{2}+2}{2n} \cdot \pi \right), \sin \left(\frac{\frac{n}{2}-1}{2n} \cdot \pi \right),\cdots \\
\cdots,  \sin \left(\frac{n-1}{2n} \cdot \pi \right), \sin \left(\frac{2}{2n} \cdot \pi \right)
\end{align*}
If $n=2m+1$ is odd,  one obtains the values 
\begin{align*}
\sin \left(\frac{3m+1}{2n} \cdot \pi \right), \sin \left(\frac{3m}{2n} \cdot \pi \right),\sin \left(\frac{3m+2}{2n} \cdot \pi \right), \sin \left(\frac{3m-1}{2n} \cdot \pi \right),\cdots\\
\cdots ,  \sin \left(\frac{n+1}{2n} \cdot \pi \right), \sin \left(\frac{2n-2}{2n} \cdot \pi \right)
\end{align*}
Hence, in both cases:
$$\left \{ |\Lambda \cdot \nu_{i}| \; \big |  \; 1\le i \le n-2 \right \} = \left \{ \sin \left(\frac{2\pi}{2n} \right ),\cdots, \sin \left(\frac{(n-1)\pi}{2n} \right ) \right \}  $$
Finally
\begin{align*}
\Lambda \cdot \nu_{0} - \Lambda \cdot \eta_1&= \Lambda\cdot (\xi_n+\xi_{n+2}) -\Lambda\cdot (\xi_{n-1}+\xi_n) \\[0.08in]
&= 0
\end{align*}
where we use Corollary \ref{main-corollary} and the fact that the $n-1$ and $n+2$ entries of $\textbf{v}_{1}$ agree. It follows that
$$|\Lambda \cdot \nu_{0}| =\sin \left (\frac{\pi}{2n} \right )$$
This proves the theorem for $\Lambda$ associated to $\textbf{v}_{1}$.  

Suppose now that $\Lambda$ corresponds to $\textbf{v}_{2}$ under the Carter-Weyl correspondence.  To calculate $\Lambda \cdot \eta_{n-k}$ for $1\le k \le n-1$ note that the relevant entries of $\textbf{v}_{1}$ and $\textbf{v}_{2}$ all agree, except the $n$'th entry now is $1/2$ instead of $1$. Therefore, using the earlier calculations for $\textbf{v}_{1}$,  one obtains
\begin{align*}
\Lambda \cdot \eta_{n-k}  &= \frac{t^{(-1)^{k+1}(2k-1)}-t^{(-1)^{n}(2n+1)}+(-1)^{n}t^{(-1)^{n}}(t^{2n}-t^{-2n})/2}{2i}\\[0.17in]
&= \frac{t^{(-1)^{k+1}(2k-1)}-(t^{2n+(-1)^{n}}+t^{-2n+(-1)^{n}})/2}{2i}\\[0.17in]
&= \frac{t^{(-1)^{k+1}(2k-1)}}{2i}
\end{align*}
where the last equation holds since $t^{2n}=-t^{-2n}$. It follows that
$$\left | \Lambda \cdot \eta_{n-k}  \right | = \frac{1}{2}$$
Recall that for $2\le i \le n-1$ one has $\nu_i=\eta_i + \xi_{n+2}$.  Since the $n+2$ entry of $\textbf{v}_{2}$ is $0$,  it follows 
$$|\Lambda \cdot \nu_{i}|=\frac{1}{2}$$
Finally
\begin{align*}
\Lambda \cdot \nu_0 &=\Lambda \cdot (\xi_{n}+ \xi_{n+2})\\[0.08in]
&=\Lambda \cdot \xi_n \\[0.08in]
&= (-1)^{n+1} t^{(-1)^{n}} \cdot \frac{1}{2}
\end{align*}
Hence, after scaling $\Lambda$, all these absolute values equal $1$.
\end{proof}
As an example,  we plot below the Weyl-Orbit values  $m_1\le m_2\le \cdots$ for the Coxeter case $\textrm{D}_{14}$ as well as for $\textrm{D}_{14}(a_6)_{\textbf{I}}$, normalized so that the first non-zero $m_i$ equals $1$.

\hspace{0.2in}

\begin{figure}[h!]
\caption{Normalized Weyl-Orbit values}
$$
\begin{tikzpicture}
\begin{axis}[
    title={},
    xlabel={$i$},
    ylabel={$m_i$},
    xmin=0, xmax=27,
    ymin=0, ymax=9,
    xtick={5,10,15,20,25},
    ytick={2,4,6,8},
    legend pos=outer north east,
    ymajorgrids=true,
    grid style=dashed,
]

\addplot[
	color=red,
    mark=square,
    ]
    coordinates {
    (1,1) 
	(2,1.98542)
	(3,2.94188)
	(4,3.85545)
	(5,4.14811)
	(6,4.14811)	
	(7,4.71280)
	(8,5.50142)
	(9,6.20982)
	(10,6.82766)
	(11,7.34595)
	(12,7.75711)
	(13,8.05516)
	(14,8.23574)

    };
   \addlegendentry{$\textrm{D}_{14}$}  
    
\addplot[
	color=blue,
    mark=o,
    ]
    coordinates {
    (1,0)  
    (2,1)(3,1)(4,1)(5,1) 
    (6,1.94985)(7,1.94985)(8,1.94985)(9,1.94985) 
    (10,2.80194)(11,2.80194)(12,2.80194)(13,2.80194)   
    (14,3.51351)(15,3.51351)(16,3.51351)(17,3.51351)   
    (18,4.04892)(19,4.04892) (20,4.04892)(21,4.04892)   
    (22,4.38129)(23,4.38129)(24,4.38129)(25,4.38129) 
    (26,4.49396)
    };
    \addlegendentry{$\textrm{D}_{14}(\textrm{a}_6)_{\textbf{I}}$}

\end{axis}
\end{tikzpicture}
$$
\end{figure}

\section{Conclusions}

We used a very general correspondence between the linear algebra of Weyl group elements and generalized Cartan matrices to calculate the mass spectrum of Toda-Weyl theories associated to a Weyl group element $\sigma$.  These mass expressions necessitate an understanding of the relation between the Carter roots involved in the definition of $\sigma$, and roots that are orbit representatives of the $\sigma$ action on the set of all roots. The details of this relation depend on the Weyl group element and we carried out this comparison for an interesting infinite family of conjugacy classes in Weyl groups of Lie algebras of type D. The corresponding mass expressions have a trigonometric form closely resembling those for usual affine Toda theories.

\section{Acknowledgements}

It is a pleasure to thank the referee for detailed remarks improving the exposition.


\begin{thebibliography}{99}
 \bibitem[1]{BLM} S.  Berman,  Y. S. Lee,  R. V. Moody: The spectrum of a Coxeter transformation, affine Coxeter transformations, and the defect map, J. Algebra \textbf{121} (1989), 339-357
 \bibitem[2]{BG} D. Borthwick, S. Garibaldi: Did a 1-dimensional magnet detect a 248-dimensional algebra?, Notices of the AMS \textbf{58} (2011), 1055-1066
\bibitem[3]{BOURBAKI} N. Bourbaki: Groupes et Alg\` ebres de Lie, Chapitres 4, 5 et 6. Hermann, Paris,  (1968)
\bibitem[4]{BOU} P. Bouwknegt: Lie algebra automorphisms, the Weyl group, and tables of shift vectors, J. Math. Phys. \textbf{30} (1989), 571-584
\bibitem[5]{BS} L. Brillon, V. Schechtman: Coxeter element and particle masses, Selecta Math.  \textbf{22} (2016),  2591-2609
\bibitem[6]{CAR} R. W. Carter: Conjugacy classes in the Weyl group, Compositio Math. \textbf{25} (1972),  1-59
\bibitem[7]{COL} R. Coldea,  D. A. Tennant,  E. M. Wheeler,  E. Wawrzynska, D. Prabhakaran,  M. Telling,  K. Habicht,  P. Smeibidl,  K. Kiefer: Quantum criticality in an Ising Chain: Experimental evidence for emergent $\textrm{E}_8$ symmetry, Science \textbf{327} (2010),  177-180
\bibitem[8]{COLE} A. J. Coleman: The Betti numbers of the simple Lie groups, Canadian J.  Math.  \textbf{10} (1958),  349-356
\bibitem[9]{COX} H. S. M. Coxeter: The product of the generators of a finite group generated by reflections,  Duke Math.  J.  \textbf{18} (1951),  765-782
\bibitem[10]{DOR} P. Dorey: Root systems and purely elastic S-matrices,  Nucl. Phys. B (1991),  654-676
\bibitem[11]{FRE} M. D. Freeman: On the mass spectrum of affine Toda field theory,  Phys. Lett.  B \textbf{261},  57-61
\bibitem[12]{FLO} A.  Fring,  H.  C.  Liao,  D. I.  Olive: The mass spectrum and coupling in affine Toda theories,  Phys.  Lett.  B \textbf{266} (1991),  82-86
\bibitem[13]{KOS} B. Kostant: The principal three-dimensional subgroup and the Betti numbers of a complex simple Lie group, Amer. J. Math. (1959),  973-1032
\bibitem[14]{LUU} M. T. Luu: The Toda-Weyl mass spectrum,  Nuclear Physics B \textbf{1012} (2025), 116823
\bibitem[15]{REE} M. Reeder: Torsion automorphisms of simple Lie algebras,  L’Enseignement Math. \textbf{56} (2010), 3-47
\bibitem[16]{SPR} T.  A. Springer: Regular elements of finite reflection group, Inventiones Math.  \textbf{25} (1974), 159-198
\bibitem[17]{STE} R. Stekolshchik: Equivalence of Carter diagrams,  Algebra and Discrete Math. \textbf{23} (2017), 138-179









\end{thebibliography}
\end{document}